\documentclass[letterpaper]{article}
\usepackage{natbib}

\usepackage{amsmath,amsfonts,amsthm,amssymb}
\usepackage{enumerate}
\usepackage{relsize}
\usepackage{algpseudocode}
\usepackage{algorithm}

\usepackage{tikz}
\usetikzlibrary{arrows}
\usetikzlibrary{positioning}
\usetikzlibrary{backgrounds}
\usetikzlibrary{automata}

\newcommand{\R}{\mathbb{R}}
\renewcommand{\citename}[1]{\citeauthor{#1}~[\citeyear{#1}]}

\newcommand{\Z}{\mathbb{Z}}

\newcommand{\NP}{\mathit{NP}}
\newcommand{\RP}{\mathit{RP}}
\newcommand{\pseudodim}{\mathit{Pdim}}
\newcommand{\vcdim}{\mathit{VCdim}}

\renewcommand{\cal}[1]{\mathcal{#1}}
\newcommand{\tup}[1]{\langle #1 \rangle}

\newcommand{\eps}{\varepsilon}
\newcommand{\maxflow}{\mathit{flow}}

\newcommand{\euler}{\mathrm{e}}

\newcommand{\CoS}{\mathit{CoS}}

\newcommand{\core}{\mathit{Core}}
\renewcommand{\vec}[1]{\mathbf{#1}}

\newcommand{\TTG}{\cal C_{\mathit{ttg}}}
\newtheorem{theorem}{Theorem}[section]
\newtheorem{lemma}[theorem]{Lemma}

\theoremstyle{definition}





\title{Learning Cooperative Games}
\author{
Maria-Florina Balcan, Ariel D. Procaccia and Yair Zick\\
Carnegie Mellon University, National University of Singapore\\
\texttt{ninamf,arielpro@cs.cmu.edu, zick@comp.nus.edu.sg}
}
\date{}
\begin{document}

\maketitle

\begin{abstract}
This paper explores a PAC (probably approximately correct) learning model in cooperative games. Specifically, we are given $m$ random samples of coalitions and their values, taken from some unknown cooperative game; can we predict the values of unseen coalitions? 
We study the PAC learnability of several well-known classes of cooperative games, such as network flow games, threshold task games, and induced subgraph games. We also establish a novel connection between PAC learnability and core stability: for games that are efficiently learnable, it is possible to find payoff divisions that are likely to be stable using a polynomial number of samples. 
\end{abstract}
\section{Introduction}
Cooperative game theory studies the following model. We are given a set of players $N = \{1,\dots,n\}$, and $v:2^N \to \R$ is a function assigning a value to every subset (also referred to as a {\em coalition}) $S \subseteq N$. 

The game-theoretic literature generally focuses on revenue division: suppose that players have formed the coalition $N$, they must now divide the revenue $v(N)$ among themselves in some reasonable manner. 
However, all of the standard solution concepts for cooperative games require intimate knowledge of the structure of the underlying coalitional interactions. For example, suppose that a department head wishes to divide company bonuses among her employees in a canonically stable manner using the \emph{core} --- a division such each coalition is paid (in total) at least its value. In order to do so, she must know the value that would have been generated by every single subset of her staff. How would she obtain all this information?

Indeed, it is the authors' opinion that the \emph{information} required in order to compute cooperative solution concepts (much more than computational complexity) is a major obstacle to their widespread implementation.  

Let us therefore relax our requirements. Instead of querying every single coalition value, we would like to elicit the underlying structure of coalitional interactions using a sample of $m$ evaluations of $v$ on subsets of $N$. To be more specific, let us focus on the most common learning-theoretic model: the {\em probably approximately correct (PAC) model}~\citep{kearns1994introduction}. Briefly, the PAC model studies the following problem: we are given a set of points $\vec x_1,\dots,\vec x_m\in \R^n$ and their values $y_1,\dots,y_m$. There is some function $f$ that generated these values, but it is not known to us. We are interested in finding a function $f^*$ that, given that $\vec x_1,\dots,\vec x_m$ were independently sampled from some distribution $\cal D$, is \emph{very likely} (``probably'') to agree with $f$ on \emph{most} (``approximately'') points sampled from the same distribution.

\citename{procaccia2006learning} provide some preliminary results on PAC learning cooperative games, focusing on \emph{simple} games (this is a technical term, not an opinion!) --- where $v(S)\in \{0,1\}$ for every $S\subseteq N$. 
Their results are mostly negative, showing that simple games require an exponential number of samples in order to be properly PAC learned (with the exception of the trivial class of unanimity games). However, the decade following the publication of their work has seen an explosive growth in the number of well-understood classes of cooperative games, as well as a better understanding of the computational difficulties one faces when computing cooperative solution concepts. This is where our work comes in.

\subsection{Our Contribution}

We revisit the connection between learning theory and cooperative games, greatly expanding on the results of \citename{procaccia2006learning}. 

In Section~\ref{sec:probablystable}, we introduce a novel relaxation of the core: it is likely (but, in contrast to the classic core, not certain) that a coalition cannot improve its payoff by working alone. 
Focusing on probable stability against likely deviations saves us a lot of computational overhead: our first result (Theorem~\ref{thm:pac-stable}) shows that {\em any} cooperative game is PAC stabilizable; that is, there exists an algorithm that will output a payoff division that is likely to be resistant against future deviations by sets sampled from a distribution $\cal D$, given a polynomial number of sets sampled i.i.d. from the same distribution. What's more, the payoff outputted is feasible: it is no more than $v(N)$ if the core of the game is not empty; if the core of $v$ is empty, then the total payoff will be no more than the minimum required in order to stabilize the game. In other words, this algorithm will pay no more than the {\em cost of stability}~\cite{costab2009} of the game $v$.

While coalitional stability is naturally desirable, understanding the underlying coalitional dynamics is no less important. In Section~\ref{sec:common} we ask whether or not classes of games are \emph{efficiently learnable}; that is, is there a polynomial-time algorithm that receives a polynomial number of samples, and outputs an accurate hypothesis with high confidence. Our main results are that network flow games~\citep[Chapter 17.9]{gtbook} are efficiently learnable with path queries (but not in general), and so are threshold task games~\citep{ocfgeb}, and induced subgraph games~\citep{deng94complexity}. We also study  $k$-vector weighted voting games~\citep{elkind2009computational}, MC nets~\citep{ieong2005mcnets}, and coalitional skill games~\citep{bachrach2008skill}.

\subsection{Related Work}
Aside from the closely related work of \citename{procaccia2006learning}, there are several papers that study coalitional stability in uncertain environments. \citename{chalkiadakis2004bayesian} and \citename{li2015noise} assume that coalition values are drawn from some unknown distribution, and we observe noisy estimates of the values. However, both papers assume full access to the cooperative game, whereas we assume that $m$ independent samples are observed. Other works study coalitional uncertainty: coalition values are known, but agent participation is uncertain due to failures~\citep{bachrach2012agent,bachrach2012solving,bachrach2013reliability}.  

Our work is also related to papers on eliciting and learning combinatorial valuation functions~\citep{zinkevich2003preference,lahaie2004applying,lahaie2005more,balcan2011submodular,balcan2011learning,badanidiyuru2012sketching}. A player's valuation function in a combinatorial auction is similar to a cooperative game: it assign a value to every subset of items (instead of every subset of players). This connection allows us to draw on some of the insights from these papers. For example, as we explain below, learnability results for XOS valuations~\citep{balcan2011learning} informed our results on network flow games. 

\section{Preliminaries}
\subsection{Cooperative Games}
A cooperative game is a tuple $\cal G =\tup{N,v}$, where $N = \{1,\dots,n\}$ is a set of players, and $v:2^n \to \R$ is called the {\em characteristic function} of $\cal G$. 
When the player set $N$ is obvious, we will identify $\cal G$ with the characteristic function $v$, referring to $v$ as the game. A game $\cal G$ is called {\em simple} if $v(S) \in \{0,1\}$ for all $S \subseteq N$; $\cal G$ is called {\em monotone} if $v(S) \le v(T)$ whenever $S\subseteq T$. 
One of the main objectives in cooperative games is finding ``good'' ways of dividing revenue: it is assumed that players have generated the revenue $v(N)$, and must find a way of splitting it. 
An {\em imputation} for $\cal G$ is a vector $\vec x \in \R^n$ that satisfies {\em efficiency}: $\sum_{i = 1}^n x_i = v(N)$, and {\em individual rationality}: $x_i \ge v(\{i\})$ for every $i \in N$. The set of imputations, denoted $I(\cal G)$, is the set of all possible ``reasonable'' payoff divisions among the players. 
Given a game $\cal G$, the {\em core} of $\cal G$ is given by
$$\core(\cal G) = \{\vec x \in I(\cal G)\mid \forall S \subseteq N: x(S) \ge v(S)\}.$$
The core is the set of all {\em stable} imputations: no subset of players $S$ can deviate from an imputation $\vec x \in \core(\cal G)$ while guaranteeing that every $i \in S$ receives at least as much as it gets under $\vec x$.

\subsection{PAC Learning}
We provide a brief overview of the PAC learning model; for a far more detailed exposition, see~\citep{kearns1994introduction,shashua2009ml}. PAC learning pertains to the study of the following problem: we are interested in learning an unknown function $f:2^N \to \R$. In order to estimate the value of $f$, we are given $m$ samples $(S_1,v_1),\dots,(S_m,v_m)$, where $v_j = f(S_j)$. 
Without any additional information, one could make arbitrary guesses as to the possible identity of $f$; for example, we could very well guess that $f^*(S_j) = v_j$ for all $j \in [m]$, and 0 everywhere else. 
Thus, in order to obtain meaningful results, we must make further assumptions. First, we restrict $f$ to be a function from a certain class of functions $\cal C$: for example, we may know that $f$ is a linear function of the form $f(S) = \sum_{i \in S}w_i$, but we do not know the values $w_1,\dots,w_n$. Second, we assume that there is some distribution $\cal D$ over $2^N$ such that $S_1,\dots,S_m$ were sampled i.i.d. from $\cal D$. Finally, we require that the estimate that we provide has low error over sets sampled from $\cal D$. 

Formally, we are given a function $v:2^N \to \R_+$, and two values $\eps>0$ (the accuracy parameter) and $\delta > 0$ (the confidence parameter). An algorithm $\cal A$ takes as input $\eps$, $\delta$ and $m$ samples, $(S_1,v(S_1)),\dots,(S_m,v(S_m))$, taken i.i.d. from a distribution $\cal D$. 
We say that $\cal A$ can properly learn a function $f\in \cal C$ from a class of functions $\cal C$ ($\cal C$ is sometimes referred to as the \emph{hypothesis class}), if by observing $m$ samples --- where $m$ can depend only on $n$ (the representation size), $\frac 1 \eps$, and $\frac 1 \delta$ --- it outputs a function $f^* \in \cal C$ such that with probability at least $1 - \delta$,
$$\Pr_{S \sim \cal D}[f(S) \ne f^*(S)] < \eps.$$
The confidence parameter $\delta$ indicates that there is some chance that $\cal A$ will output a bad guess (intuitively, that the $m$ samples given to the algorithm are not representative of the overall behavior of $f$ over the distribution $\cal D$), but this is unlikely. The accuracy parameter $\eps$ indicates that for most sets sampled from $\cal D$, $f^*$ will correctly guess the value of $S$. 

Note that the algorithm $\cal A$ does not know $\cal D$; that is, the only thing required for PAC learnability to hold is that the input samples independent, and that future observations are also sampled from $\cal D$. In this paper, we only discuss {\em proper learning}; that is, learning a function $f \in \cal C$ using only functions from $\cal C$. 

We say that a finite class of functions $\cal C$ is {\em efficiently PAC learnable} if the PAC learning algorithm described above runs in polynomial time, and its sample complexity $m$ is polynomial in $n$, $\frac 1 \epsilon$, and $\frac 1 \delta$.

Efficient PAC learnability can be established via the existence of {\em consistent algorithms}. Given a class of functions $\cal C$ from $2^N$ to $\R$, suppose that there is some efficient algorithm $\cal A$ that for any set of samples $(S_j,v_j)_{j = 1}^m$ is able to output a function $f^* \in \cal C$ such that $f^*(S_j) = v_j$ for all $j \in [m]$, or determine that no such function exists. Then $\cal A$ is an algorithm that can efficiently PAC learn $\cal C$ given $m \ge \frac1\eps \log\frac{|\cal C|}{\delta}$ samples. Conversely, if no efficient algorithm exists, then $f$ cannot be efficiently PAC learned from $\cal C$.

To conclude, in order for a class $\cal C$ to be efficiently PAC learnable, we must have polynomial bounds on the {\em sample complexity} --- i.e. the number of samples required in order to obtain a good estimate of functions in $\cal C$ --- as well as a poly-time algorithm that finds a function in $\cal C$ which is a perfect match for the samples. We observe that in many of the settings described in this paper, the sample complexity is low, but finding consistent functions in $\cal C$ is computationally intractable (it would entail that $P = \NP$ or that $\NP = \RP$). In contrast, the result of \citename{procaccia2006learning} establishes lower bounds on the sample complexity for PAC learning monotone simple games, but there exists a simple algorithm that outputs a hypothesis consistent with any sample.

When the hypothesis class $\cal C$ is finite, it suffices to show that $\log|\cal C|$ is bounded by a polynomial in order to establish a polynomial sample complexity. In the case of an infinite class of hypotheses, this bound becomes meaningless, and other measures must be used. When learning a function that takes values in $\{0,1\}$, the VC dimension~\citep{kearns1994introduction} captures the learnability of $\cal C$. Given a class $\cal C$, and a list $\cal S$ of $m$ sets $S_1,\dots,S_m$, we say that $\cal C$ {\em shatters} $\cal S$ if for every $\vec b \in \{0,1\}^m$ there exists some $v_{\vec b} \in \cal C$ such that $v(\vec b)(S_j) = b_j$ for all $j$. We write $$\vcdim(\cal C) = \max\{m\mid \exists \cal S,|\cal S| = m,\cal C \mbox{ can shatter }\cal S\}.$$ 

When learning hypotheses that output real numbers (as opposed to functions that take on values in $\{0,1\}$), the notion of {\em pseudo dimension} is used in order to bound the complexity of a function class. 
Given a sample of $m$ sets $\cal S = S_1,\dots,S_m \subseteq N$, we say that a class $\cal C$ {\em shatters} $\cal S$ if there exist thresholds $r_1,\ldots,r_m\in \R$ such that for every $\vec b \in \{0,1\}^m$ there exists some $v_{\vec b} \in \cal C$ such that $v_{\vec b}(S_j) \ge r_j$ if $b_j = 1$, and $v_{\vec b}(S_j) < r_j$ if $b_j = 0$. We write 
$$\pseudodim(\cal C) = \max\{m \mid \exists \cal S: |\cal S| = m,\cal C \mbox{ can shatter }\cal S\}.$$
It is known~\citep{anthony2009neural} that if $\pseudodim(\cal C)$ is polynomial, then the sample complexity of $\cal C$ is polynomial as well.

\section{PAC Stability}\label{sec:probablystable}
In the context of cooperative games, one could think of PAC learning as the following process. A central authority wishes to find a stable outcome, but lacks information about agents' abilities. It solicits the independent valuations of $m$ subsets of agents, and outputs an outcome that, with probability $1 - \delta$, is likely to be stable against any unknown valuations. 

More formally, given $\eps \in (0,1)$, we say that an imputation $\vec x \in I(\cal G)$ is {\em $\eps$-probably stable} under $\cal D$ if 

$$\Pr_{S\sim \cal D}\left[x(S) \ge v(S) \right] \ge 1 - \eps.$$

An algorithm $\cal A$ can {\em PAC stabilize} a class of functions $\cal C$ from $2^N$ to $\R$ if, given $\eps,\delta\in (0,1)$, and $m$ i.i.d. samples $(S_1,v(S_1)),\dots,(S_m,v(S_m))$ of some $v \in \cal C$, with probability $1 - \delta$, $\cal A$ outputs an outcome $\vec x$ that is $\eps$-probably stable under $\cal D$. 
There is an important subtlety here: suppose that we know the value $v(N)$; by grossly overpaying the agents we could easily stabilize any game. Under the mild assumption of monotonicity, we can pay each $i \in N$ a value of $v(N)$, which results in a trivially stable payoff division. 
In addition to PAC stability, we must ensure that the total payment to agents is no more than $v(N)$ if the core is not empty; if the core of $\cal G$ is not empty, then the output of our algorithm should be no more than the minimal payment required in order to stabilize the game. Formally, given a cooperative game $\cal G = \tup{N,v}$ and a non-negative constant $\Delta \in \R_+$, we define $\cal G_\Delta = \tup{N,v_\Delta}$, where $v_\Delta(S) = v(S)$ for all $S \subset N$, and $v_\Delta(N) = v(N) + \Delta$. The {\em cost of stability}~\cite{costab2009} is 
$$\CoS(\cal G) = \min\{\Delta \in \R_+ \mid \core(\cal G_\Delta) \ne \emptyset\}.$$
If $\core(\cal G) \ne \empty$ then $\CoS(\cal G) = 0$; indeed, the larger the value of $\CoS(\cal G)$, the greater the subsidy required in order to stabilize $\cal G$. The following theorem establishes the poly-time PAC stabilizability of cooperative games, with formal bounds on the total payoff provided by the outputted imputation\footnote{The authors would like to thank Amit Daniely for pointing out some of the basic ideas outlined in the proof of Theorem~\ref{thm:pac-stable}.}. 
\begin{theorem}\label{thm:pac-stable}
There exists an algorithm that, given a cooperative game $\cal G = \tup{N,v}$ and $m$ i.i.d. samples $(S_1,v(S_1)),\dots,(S_m,v(S_m))$ sampled from a distribution $\cal D$, where $m$ is polynomial in $\frac1\eps,\log\frac1\delta$, outputs a payoff division $\vec x^*$ that $(\eps,\delta)$ PAC stabilizes $\cal G$. Furthermore, $x^*(N) \le \CoS(\cal G)$; in particular, if $\core(\cal G) \ne \emptyset$ then $x^*(N) \le v(N)$.
\end{theorem}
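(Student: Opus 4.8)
The plan is to cast PAC stabilization as a \emph{sampled} linear program and then argue that its optimum generalizes. Given the samples, I would output the payoff vector $\vec x^*$ solving
$$\min_{\vec x}\ \sum_{i \in N} x_i \ \ \text{subject to}\ \ x(S_j) \ge v(S_j)\ (j \in [m]),\ \ x_i \ge v(\{i\})\ (i \in N).$$
This is a linear program in $n$ variables with $m+n$ constraints, hence solvable in time polynomial in $n$ and $m$; its feasible region is nonempty because, as noted below, the truly stabilizing imputation already satisfies every constraint (and since $v$ is nonnegative, individual rationality forces $\vec x^*\ge \vec 0$). Two things must then be verified: (i) the total payment $x^*(N)$ is no larger than the cost of stabilizing $\cal G$, and (ii) $\vec x^*$ is $\eps$-probably stable with confidence $1-\delta$.

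Claim~(i) is the easy direction. Let $\vec x^{\mathrm{opt}}$ be an imputation in $\core(\cal G_\Delta)$ for $\Delta=\CoS(\cal G)$, so that $x^{\mathrm{opt}}(N)=v(N)+\CoS(\cal G)$ and $x^{\mathrm{opt}}(S)\ge v(S)$ for \emph{all} $S\subseteq N$. In particular $\vec x^{\mathrm{opt}}$ satisfies all $m$ sampled constraints and all individual-rationality constraints, so it is feasible for the program above. Since $\vec x^*$ minimizes the total payment over a feasible region containing $\vec x^{\mathrm{opt}}$, we get $x^*(N)\le x^{\mathrm{opt}}(N)=v(N)+\CoS(\cal G)$, which collapses to $x^*(N)\le v(N)$ whenever $\core(\cal G)\ne\emptyset$.

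The substance of the theorem — and the step I expect to be the main obstacle — is claim~(ii): why should satisfying $m$ randomly drawn constraints force $\vec x^*$ to be stable on a $(1-\eps)$-fraction of $\cal D$? I would derive this from a realizable-case uniform-convergence bound. For each payoff vector $\vec x$ define its \emph{violation region} $B_{\vec x}=\{S\subseteq N\mid x(S)<v(S)\}$; writing $x(S)=\langle \vec x,\I_S\rangle$ with $\I_S$ the indicator vector of $S$, each $B_{\vec x}$ is the negative side of a test that is affine in $\vec x$, so a standard argument bounds the range space $\{B_{\vec x}\}$ by $\vcdim \le n+1$. By construction $B_{\vec x^*}$ contains no sampled coalition. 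Invoking the $\eps$-net theorem for range spaces of bounded VC dimension, once $m=O\!\big(\tfrac1\eps(n\log\tfrac1\eps+\log\tfrac1\delta)\big)$ the sample is an $\eps$-net with probability $1-\delta$, so every region of $\cal D$-measure at least $\eps$ contains a sample; as $B_{\vec x^*}$ contains none, $\Pr_{S\sim\cal D}[x^*(S)<v(S)]<\eps$. This is precisely $\eps$-probable stability, and the $\log\frac1\delta$ dependence falls out of the $\eps$-net bound.

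The delicate points I would need to pin down are the VC-dimension estimate for the violation regions (so that $m$ stays polynomial) and checking the $\eps$-net hypothesis in this one-sided, realizable regime where $\vec x^*$ is exactly consistent with labelling every sample ``stable.'' An equivalent route is the scenario-optimization leave-one-out bound, which directly controls the out-of-sample violation probability of a sampled LP in $n$ variables and shows it exceeds $\eps$ with probability at most $\sum_{i<n}\binom{m}{i}\eps^i(1-\eps)^{m-i}$, again yielding the stated dependence on $\frac1\eps$ and $\log\frac1\delta$; this variant requires a unique optimizer, which I would secure by a lexicographic tie-breaking rule so that the leave-one-out counting is well defined.
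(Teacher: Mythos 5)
Your proposal follows essentially the same route as the paper: solve the sampled linear program, bound $x^*(N)$ by observing that any truly stabilizing payoff vector (hence the optimum of the full, exponentially-constrained LP) remains feasible for the sampled LP, and derive $\eps$-probable stability from the fact that the LP output is consistent with every sampled constraint. The paper compresses this last step into a one-line appeal to consistent learning of linear functions, and your VC-dimension/$\eps$-net argument over the violation regions $B_{\vec x}$ is precisely the standard justification of that appeal, so the two proofs coincide in substance.
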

\begin{proof}
Let us first consider the problem of computing the cost of stability. It can be expressed as the following linear optimization problem:
\begin{align}
\text{min } & \sum_{i = 1}^n x_i & \label{lp:CoS}\\
\text{s.t. } & \sum_{i \in S} x_i \ge v(S) & \forall S \subseteq N \notag
\end{align}
If the value of the optimal solution to \eqref{lp:CoS} is at most $v(N)$ then the core of $\cal G$ is not empty. Now, clearly \eqref{lp:CoS} is not poly-time computable, since the number of constraints is exponential. However, given $m$ i.i.d. samples $(S_1,v(S_1)),\dots,(S_m,v(S_m))$, consider the following optimization problem:
\begin{align}
\text{min } & \sum_{i = 1}^n x_i & \label{lp:PAC-stable}\\
\text{s.t. } & \sum_{i \in S_j} x_i \ge v(S_j) & \forall j \in \{1,\dots,m\}\notag
\end{align}
Unlike \eqref{lp:CoS}, \eqref{lp:PAC-stable} is poly-time computable in $m$ and $n$. Moreover, if $\vec x^*$ is the optimal solution to \eqref{lp:PAC-stable}, then the value of $x^*(N)$ is no greater than the value of the optimal solution to \eqref{lp:CoS}, as \eqref{lp:CoS} imposes more constraints on the target function than \eqref{lp:PAC-stable}. To see why \eqref{lp:PAC-stable} outputs a payoff division that PAC stabilizes $\cal G$, we observe that the problem of finding a PAC stable payoff division is equivalent to the problem of learning an unknown linear function $\vec x$ such that $\sum_{i \in S}x_i \ge v(S)$ for all $S \subseteq N$: since \eqref{lp:PAC-stable} is a consistent algorithm, we know that it outputs a PAC approximation of $\vec x$.
\end{proof}
Theorem~\ref{thm:pac-stable} states that any cooperative game is PAC stabilizable, irrespective of its own learnability guarantees. 
While the proof of Theorem~\ref{thm:pac-stable} is simple, it has several important implications. The immediate implication of Theorem~\ref{thm:pac-stable} is overcoming the computational complexity of finding stable outcomes. If one is willing to forgo guaranteed stability, it is possible to find payoff divisions that are likely to be stable. 
Second, PAC stabilizability is a stability concept that is founded on observational data: one does not need to know anything about the underlying cooperative game in order to use partial observations of its values to achieve stability. 
Finally, the underlying assumptions about the PAC stabilizability concept are reasonable from a practical perspective. There are several works that study the stability of cooperative games when certain restrictions on the coalitions that may form are in place. The most notable example of such a line of work are Myerson interaction graphs~\cite{myerson1977graphs}; in addition to the game $\cal G = \tup{N,v}$, we are given a connected, undirected graph $\Gamma = \tup{N,E}$, whose nodes are the players. A coalition $S \subseteq N$ may form only if the subgraph induced by $S$  in $\Gamma$ is connected. Several works have shown that under certain assumptions on the structure of the Myerson interaction graph, graph restricted coalitional games may be stabilized~\cite{demange2004}, exhibit a low cost of stability~\cite{bousquet2015graphs,meir2013cost}, and have stable outcomes found in polynomial time~\cite{chalkiadakis2012stability,zick2012comp}. Rather than assuming a certain underlying structure on the game, the PAC stabilizability approach observes the actual coalitions formed, and assumes that past events are a good prediction of future agent coalition formation habits; thus, if all observed coalitions were, say, of size at most 2, it is likely that the PAC stable payoff division would be stable against pairwise deviations. 

\section{PAC Learnability of Common Classes of Cooperative Games}
\label{sec:common}
In what follows, we explore the PAC learnability of common classes of cooperative games. Some of our computational intractability results depend on the assumption that $\NP \ne \RP$, where $\RP$ is the class of all languages for which there exists a poly-time algorithm that for every instance $I$, outputs ``no'' if $I$ is a no instance, and ``yes'' with probability $\ge \frac12$ if it is a ``yes'' instance. It is believed that $\NP \ne \RP$~\citep{hemaspaandra2002complexity}. 

\subsection{Network Flow Games}
A {\em network flow game} is given by a weighted, directed graph $\Gamma = \tup{V,E}$, with $w:E\to \R_+$ being the weight function for the edges. Here, $N = E$, and $v(S) = \maxflow(\Gamma|_S,w,s,t)$, where $\maxflow$ denotes the maximum $s$-$t$ flow through $\Gamma$, where edge weights are given by $w$, and $s,t \in V$.

We begin by showing that a similar class of functions is not efficiently learnable. We define the following family of functions, called {\em min-sum} functions which are defined as follows: there exists a list of $n$-dimensional, non-negative integer vectors $\vec w_1,\dots,\vec w_k$. For every $S \subseteq N$, $f(S) = \min_{\ell\in[k]} \vec w_\ell(S)$, where $\vec w_\ell(S) = \sum_{j\in S} w_{\ell j}$. If $k=1$, we say that the min-sum function is \emph{trivial}. We note that \cite{balcan2011learning} study the learnability of XOS valuations, where the $\min$ is replaced with a $\max$. 

We define $k$-min-sum to be the class of min-sum functions defined with $k$ vectors.

\begin{lemma}\label{lem:min-sum-small-trees-unlearnable}
The set of $k$-min-sum functions is not efficiently PAC learnable unless $\NP = \RP$ whenever $k \ge 3$.
\end{lemma}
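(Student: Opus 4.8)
The plan is to invoke the equivalence between efficient proper PAC learnability and the existence of an efficient consistent algorithm recorded in the preliminaries: if the class of $k$-min-sum functions were efficiently PAC learnable, then there would be an efficient (randomized) procedure that, given labeled samples realizable by some $k$-min-sum function, recovers one consistent with them. I would therefore show that the associated \emph{consistency problem} --- decide whether there is a $k$-min-sum function $f$ with $f(S_j)=v_j$ for all given samples $(S_j,v_j)$ --- is $\NP$-hard, by a reduction from graph $k$-colorability (the $k=3$ case being \graphcoloring), which is $\NP$-complete for every fixed $k\ge 3$ (e.g.\ by attaching a $(k-3)$-clique, joined to all vertices, to a $3$-coloring instance).

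The reduction I would use identifies the player set with the vertex set, $N=V$, of the input graph $G=\tup{V,E}$, and uses only singletons and edges as sample coalitions:
\[
\{(\{i\},0) : i\in V\}\quad\text{and}\quad\{(\{i,j\},1) : (i,j)\in E\}.
\]
The intended correspondence is that a $k$-min-sum function $f=\min_{\ell\in[k]}\vec w_\ell$ encodes a coloring by assigning vertex $i$ the color of any vector $\vec w_\ell$ that is cheapest on $i$: the singleton constraints force each vertex to have at least one ``zero'' vector (a color), and the edge constraints forbid the two endpoints from sharing a zero vector (forbid a monochromatic edge).

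Concretely, for the forward direction, given a proper $k$-coloring $c:V\to[k]$ I would set $w_{\ell i}=0$ if $c(i)=\ell$ and $w_{\ell i}=1$ otherwise; one checks $f(\{i\})=0$ and, for every edge $(i,j)$ with $c(i)\ne c(j)$, that $\min_\ell(w_{\ell i}+w_{\ell j})=1$, the value $1$ being attained at $\ell=c(i)$ while no $\ell$ attains $0$ since the endpoints have distinct colors. For the converse, from any consistent $f=\min_\ell\vec w_\ell$ with $\vec w_\ell\in\Z_{\ge 0}^n$, the constraint $f(\{i\})=0$ forces some $\ell$ with $w_{\ell i}=0$; choosing one such $\ell$ as $c(i)$ yields a coloring, and the constraint $f(\{i,j\})=1>0$ rules out $w_{\ell i}+w_{\ell j}=0$, so the endpoints cannot have been assigned the same zero vector, i.e.\ $c$ is proper. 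This gives consistency if and only if $k$-colorability, and the reduction is clearly polynomial.

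Finally I would translate $\NP$-hardness of consistency into the claimed collapse. Given a consistency instance with $m$ samples, run the hypothetical learner on the empirical (uniform) distribution over $S_1,\dots,S_m$ labeled by the $v_j$, with confidence $\delta=\tfrac12$ and accuracy $\eps<\tfrac1m$; if a consistent function exists the samples are realizable, so with probability at least $\tfrac12$ the learner returns some $f^*$ in the class whose error under the empirical distribution is below $\tfrac1m$, hence exactly zero, i.e.\ consistent, which we verify in polynomial time; if none exists the learner can never return a consistent hypothesis. This is an $\RP$ algorithm for an $\NP$-hard problem, forcing $\NP=\RP$. I expect the main obstacle to be the tightness of the reduction in both directions --- ensuring that the exact-value constraints (singletons equal to $0$, edges equal to $1$) capture \emph{precisely} proper colorability, with no spurious consistent function arising from an unintended choice of integer weights --- together with the care needed to phrase the learnability-to-$\RP$ reduction so that the accuracy and confidence parameters, and hence the learner's sample complexity, remain polynomially bounded in the input size.
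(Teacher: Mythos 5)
Your proof is correct, but it takes a genuinely different route from the paper's. The paper proves this lemma by reducing proper PAC learning of $k$-clause CNF formulas --- known to be hard unless $\NP=\RP$ by Pitt and Valiant --- to proper learning of $(k+1)$-min-sum functions: the players are the $2n$ literals plus a special player $y$, each clause becomes a weight vector, an extra vector isolates $y$, and each sampled truth assignment $T$ is turned into two labeled sets $(S_T,\phi(T))$ and $(S_T\setminus\{y\},0)$ so that any consistent $(k+1)$-min-sum hypothesis can be decoded back into a consistent $k$-clause CNF. You instead prove $\NP$-hardness of the \emph{consistency problem} directly, via a reduction from graph $k$-coloring that uses only singleton samples $(\{i\},0)$ and edge samples $(\{i,j\},1)$, and then apply the standard consistency-to-$\RP$ conversion (learn under the empirical distribution with $\eps<1/m$, $\delta=\tfrac12$, verify the output). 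Both arguments are sound, and they are in fact cousins: Pitt and Valiant's CNF-learning hardness is itself proved via a coloring-style reduction, so you have essentially inlined the underlying combinatorial argument rather than invoking the learning-hardness result as a black box. Your version buys self-containedness and directness --- it avoids the bookkeeping about which weight vectors place zero versus positive weight on $y$, and it gives hardness for $k$-min-sum at exactly $k\ge 3$ rather than via the $k\mapsto k+1$ shift --- while the paper's route illustrates a reduction \emph{between learning problems} (mapping distributions and labeled samples), a technique it reuses in the reduction from min-sum functions to network flow games in Theorem~\ref{cor:networkflows-unlearnable}. One point worth making explicit in your write-up: both the verification step and the treatment of no-instances rely on the learner being \emph{proper} (its output is itself a $k$-min-sum function, hence efficiently evaluable on the samples and never consistent when no consistent function exists); you use this implicitly, and it is exactly where the restriction to proper learning enters.
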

\begin{proof}
Our proof relies on the fact that CNF formulas with more than two clauses are not efficiently learnable unless $\NP = \RP$~\citep{pitt1988computational}.

Given a set of variables $x_1,\dots,x_n$, let us define a set of players 
$$N = \{x_1,\bar x_1,\dots,x_n,\bar x_n,y\}.$$
Given a $k$-clause CNF formula of the form $\phi = \bigwedge_{\ell = 1}^k C_\ell$, where $C_\ell$ is a disjunctive clause containing literals from $N$ (excluding the variable $y$), we define the following $k+1$-min-sum function $f_\phi:2^N \to \{0,1\}$:
$$f_\phi(S) = \min\{(|C_j \cap S|)_{j = 1}^m,|S\cap \{y\}|\}.$$
In order to have a value of 1, $S$ must intersect with every $C_j$ on at least one player; otherwise, $f_\phi(S) = 0$. Moreover, $S$ must contain $y$. 
In what follows, we will take truth assignments on $x_1,\dots,x_n$ and map them to subsets of players in $N$, ensuring that $f_\phi(S) = 1$ if and only if the truth assignment from which $S$ was generated from an assignment satisfying $\phi$.  
We note that it is possible that one cannot generate a truth assignment for $\phi$ from all sets $S$ for which $f_\phi(S) = 1$; for example, $f_\phi(N) = 1$, but this is completely uninformative. Given a truth assignment $T$ for $\phi$, we define its set equivalent to be $S_T$, where $S_T$ contains $x_i$ if $x_i$ is true in $T$, otherwise $S_T$ contains $\bar x_i$. Also, $S_T$ contains $y$. Thus, $f_\phi(S_T) = 1$ if and only if $T$ satisfies $\phi$. 

Now, given a set of inputs from $\phi$ $(T_1,\phi(T_1)),\dots,(T_m,\phi(T_m))$, we write $\cal T = \{T_1,\dots,T_m\}$. 
For every $T \in \cal T$, we add to the input $(S_T,\phi(T))$, and $(S_T\setminus\{y\},0)$. The sampled point $(S_T\setminus \{y\},0)$ is added to ensure that the ``importance'' of $y$ in the definition of $f_\phi$ is noted by any algorithm that is consistent with the input. In other words, we can ``pretend'' that the input of truth assignments to $\phi$ is an input of an unknown $k+1$-min-sum function as defined above, and use it to define a CNF formula that is consistent with $\phi$. Moreover, if the original input $m$ truth assignments, the number of points we provide to the algorithm learning $f_\phi$ is $2m$.

Suppose that there exists some consistent poly-time algorithm $\cal A$ for $k+1$-min-sum functions; that is, given a list of $m$ inputs from an unknown $k+1$-min-sum function $f$, $\cal A$ outputs a list of non-negative vectors $\vec w_1^*,\dots,\vec w_{k+1}^*$ that define a $k+1$-min-sum function that is consistent on the inputs, and does so in time polynomial in $n$,$\frac1\eps$, and $\log\frac1\delta$. Then, if we input to this algorithm the inputs defined above, it will output $f^*$ defined by $\vec w_1^*,\dots,\vec w_{k+1}^*$, such that $f^*(S_{T_\ell}) = f_\phi(S_{T_\ell})$ on the inputs we designed.
Using $f^*$, we now show how one can reconstruct a CNF formula with at most $k$ clauses, such that $\phi^*(T) = \phi(T)$ for all $T \in \cal T$.

Let us define $\cal T_-$ to be the set of truth assignments in $\cal T$ that do not satisfy $\phi$ and $\cal T_+$ to be the set of truth assignments in $\cal T$ that do. We assume that we observe both satisfying and non-satisfying truth assignments (otherwise we can just output a trivial always true or always false CNF). 

First, we claim that there must exist at least one weight vector in the description of $f^*$ with the value of $y$ set to some positive quantity. Suppose that all of the vectors $\vec w_1^*,\dots,\vec w_{k+1}^*$ have the value of $y$ set to 0; then for any truth assignment $T \in \cal T$ that satisfies $\phi$, we have $(S_T,1)$ and $(S_T\setminus\{y\},0)$; but since none of the weight vectors of $f^*$ assign a positive weight to $y$, $f^*(S_T) = f^*(S_T\setminus\{y\})$, a contradiction to the consistency of $f^*$. We conclude that there exists at least one vector $\vec w_\ell^*$ that has a positive weight assigned to $y$.

Furthermore, we claim that there must exist at least one vector in the description of $f^*$ with the weight of $y$ set to $0$. This holds since for every $T \in \cal T_-$, there must be at least one vector $\vec w_\ell^*$ that takes a value of $0$ on $S_T$, to maintain consistency: vectors with a positive weight on $y$ have a positive weight for $S_T$, and if all of the vectors defining $f^*$ assign $S_T$ a positive weight, then $f^*(S_T)=1$, a contradiction to consistency.

Let us take the set of vectors who assign a weight of 0 to $y$, call that set $\cal W^*$; since there exist some vectors that assign a positive weight to $y$, $|\cal W^*|\le k$. Furthermore, since there exist some vectors who assign a weight of $0$ to $y$, $\cal W^*\ne \emptyset$. For every $\vec w_\ell^* \in \cal W^*$, we define a clause $C_\ell^*$ such that $C_\ell^*$ is a disjunction of all literals in the support of $\vec w_\ell^*$; in other words, if the weight of $x_i$ in $\vec w_\ell^*$ is positive, then $x_i$ is in $C_\ell^*$, and if the weight of $\bar x_i$ is positive then $\bar x_i$ is in $C_\ell^*$. 

Let us write the resulting CNF formula to be $\phi^*$. First, since $|\cal W^*|\le k$, the number of clauses in $\phi^*$ is at most $k$. Second, if $\phi(T) = 1$, then $f^*(S_T) = 1$, and in particular, $w_\ell^*(S_T) > 0$ for all $\vec w_\ell^* \in \cal W^*$; thus, $\phi^*(T) = 1$ as well. Finally, if $\phi(T) = 0$, then $f(S_T) = 0$, and there is at least one weight vector in $\cal W^*$ that has $w_\ell^*(S_T) = 0$. The corresponding clause $C_\ell^*$ is not satisfied by $T$, and in particular, $\phi^*(T) = 0$. This concludes the proof.
\end{proof}
\begin{theorem}\label{cor:networkflows-unlearnable}
Network flow functions are not efficiently learnable unless  $\NP = \RP$.
\end{theorem}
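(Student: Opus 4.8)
The plan is to transfer the hardness of Lemma~\ref{lem:min-sum-small-trees-unlearnable} to network flow games through max-flow/min-cut duality. The starting observation is that a network flow function is \emph{itself} a min-sum function: by the max-flow min-cut theorem,
\[
v(S)=\maxflow(\Gamma|_S,w,s,t)=\min_{(A,B)}\sum_{e\in E_{A\to B}\cap S}w(e),
\]
where the minimum ranges over all $s$-$t$ cuts $(A,B)$ and $E_{A\to B}$ is the set of edges crossing the cut. Thus each $s$-$t$ cut contributes one weight vector (the capacities of its crossing edges, zero elsewhere), and $v$ is the minimum of these. Network flow functions are therefore exactly the min-sum functions whose weight vectors are simultaneously realizable as the cut-indicator vectors of a single weighted graph, and the reduction amounts to realizing the hard instances $f_\phi$ built in the proof of Lemma~\ref{lem:min-sum-small-trees-unlearnable} as network flow games.

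Concretely, given the $k$-clause CNF $\phi$ with player set $N=\{x_1,\bar x_1,\dots,x_n,\bar x_n,y\}$, I would construct a graph $\Gamma_\phi$ with source $s$ and sink $t$, a unit-capacity edge for each literal, a unit-capacity edge for $y$, and auxiliary edges of very large (effectively infinite) capacity that are included in \emph{every} sampled coalition, so they are always present. The design goal is that the min-cut of $\Gamma_\phi|_S$ equal $f_\phi(S)$: for every clause $C_j$ there should be an $s$-$t$ cut whose present crossing edges are exactly the literal-edges of $C_j$ (so an unsatisfied clause yields a zero-capacity cut), and there should be a cut crossed only by the $y$-edge (so $y\notin S$ yields a zero-capacity cut, and since this cut has capacity at most one it caps $v(S)$ at one, matching the $\{0,1\}$-valued $f_\phi$). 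The crucial feature that makes this feasible is precisely what distinguishes min-cut reasoning from single-path reasoning: one edge may cross many cuts at once, so a literal shared between several clauses is represented by a \emph{single} edge that contributes to each of those clauses' cuts simultaneously.

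With exact realizability in hand, the reduction follows the consistency-based template from the preliminaries: a polynomial-time consistent learner for network flow functions, run on the samples $(S_T,\phi(T))$ and $(S_T\setminus\{y\},0)$ for $T\in\cal T$, returns a graph whose flow function agrees with $f_\phi$ on these points; reading off its minimum cuts (computable in polynomial time via max-flow) recovers, exactly as in the proof of Lemma~\ref{lem:min-sum-small-trees-unlearnable}, a CNF with at most $k$ clauses consistent with $\phi$ on $\cal T$, which is impossible unless $\NP=\RP$. The main obstacle is the realizability step. I must choose the auxiliary (infinite-capacity) wiring so that (i) each clause genuinely corresponds to an $s$-$t$ cut crossed exactly by its literal-edges and by no auxiliary edge, (ii) shared literals are handled by a single edge crossing all the relevant cuts, and (iii) no \emph{spurious} cut ever has capacity smaller than $f_\phi(S)$, so that the graph's min-cut equals $f_\phi$ rather than merely bounding it. Establishing (iii)---that the only cheap cuts are the intended clause-cuts and the $y$-cut---is the delicate part, and a secondary point to verify is that a consistent $k$-clause CNF can still be extracted from whatever (possibly unstructured) consistent graph the flow learner happens to return.
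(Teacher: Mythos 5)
Your plan hinges on a realizability step that is not merely ``delicate'' but provably impossible for general CNFs, and the obstruction is the very max-flow/min-cut duality you build on. Consider $\phi=(x_1\lor x_2)\land(x_2\lor x_3)\land(x_1\lor x_3)$: three pairwise-intersecting clauses with no common literal (Pitt--Valiant hardness must cover all $k$-clause formulas, so your reduction has to handle such $\phi$). Suppose some digraph $G$ met your design goals: for each clause $C_j$ an $s$-$t$ cut whose edges are exactly the literal edges of $C_j$ (condition (i)), auxiliary edges of effectively infinite capacity, and $\maxflow(G|_{S_T\cup \mathit{Aux}})=f_\phi(S_T)$ on all assignment coalitions. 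Two capacity-independent structural facts follow. First, since $\{e_{x_1},e_{x_2}\}$, $\{e_{x_2},e_{x_3}\}$ and $\{e_{x_1},e_{x_3}\}$ are all cuts, every $s$-$t$ path of $G$ must use at least two of $e_{x_1},e_{x_2},e_{x_3}$ (a single edge lies in at most two of these three pairs). Second, no single $e_{x_i}$ is itself a cut, and neither is $\emptyset$: otherwise the coalition of the \emph{satisfying} assignment that sets $x_i$ false (resp.\ any satisfying coalition) would be disconnected and get value $0$ instead of $1$. Now apply the max-flow min-cut theorem to this same digraph $G$ under a different capacity assignment: capacity $1$ on $e_{x_1},e_{x_2},e_{x_3}$ and a huge capacity $M$ on every other edge. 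Any cut either contains an edge of capacity $M$, or consists solely of edges among $\{e_{x_1},e_{x_2},e_{x_3}\}$ and hence, by the second fact, has at least two edges; since $\{e_{x_1},e_{x_2}\}$ is a cut, the min cut equals $2$. But by the first fact each unit of flow consumes at least two of the three available units of capacity on these edges, so the max flow is at most $3/2<2$, contradicting max-flow min-cut. Hence no such $G$ exists: conditions (i) and the required values are jointly unsatisfiable, so condition (iii) cannot be ``established'' by any choice of auxiliary wiring. The idea you lean on --- that one shared edge can sit in several clause-cuts simultaneously --- is exactly what Menger-type duality forbids; cut clutters of graphs are far more constrained than clause hypergraphs. (Your closing worry, extracting a $k$-clause CNF from an arbitrary consistent graph, is a second unresolved issue, but it is moot given the above.)

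The paper's proof avoids all of this by abandoning the one-edge-per-player principle that drives your construction. It reduces from \emph{general} min-sum functions rather than re-deriving the CNF encoding inside graphs: given $f(S)=\min_\ell \vec w_\ell(S)$, it builds a chain of $k$ layers, where layer $\ell$ consists of $n$ parallel edges $e_{1\ell},\dots,e_{n\ell}$ with capacities $w_{1\ell},\dots,w_{n\ell}$; player $i$ corresponds to the $k$ edges $e_{i1},\dots,e_{ik}$, a coalition $S$ is mapped to $E_S=\{e_{i\ell}\mid i\in S,\ \ell\in[k]\}$, and the distribution $\cal D'$ is supported only on sets of this form. Then $\maxflow_\Gamma(E_S)=\min_\ell \vec w_\ell(S)=f(S)$ holds exactly, since the flow through layer $\ell$ is capped by $\vec w_\ell(S)$. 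Because each player owns a separate edge in \emph{every} layer, a literal shared by several clauses never needs to be a single edge crossing several cuts, which is precisely where your approach breaks. The general lesson: the reduction need not realize the hard function as a flow game on the same player set with players in bijection with edges; it only needs to embed sampled coalitions into coalitions of a flow game, and duplicating each player into $k$ parallel copies makes that embedding trivial while preserving the learning problem.
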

\begin{proof}
Our proof reduces the problem of learning min-sum functions to the problem of learning network flow functions. Given a min-sum target function $f$, defined by $\vec w_1,\dots,\vec w_k$ (with $k \ge 3$), and a distribution $\cal D$ over samples of $N$, we construct the directed graph $\Gamma = \tup{V,E}$ as follows (see Figure~\ref{fig:networkflows-unlearnable}). 

For every weight vector $\vec w_\ell = (w_{1\ell},\dots,w_{n\ell})$, we define vertices $\ell,\ell+1$, and $n$ edges from $\ell$ to $\ell+1$, where the capacity of the edge $e_{i\ell}$ is $w_{i\ell}$. Finally, we denote the vertex $k+1$ as the target $t$, and the vertex $1$ as the source $s$. Given a set $S \subseteq N$, we write $E_S = \{e_{i\ell}\mid \ell \in [k],i \in S\}$. We observe that the flow from $s$ to $t$ in the constructed graph using only edges in $E_S$ equals $f(S)$; in other words, $\maxflow_\Gamma(E_S) = f(S)$ for all $S \subseteq N$. 
Now, given a probability distribution $\cal D$ over $2^N$, we define a probability distribution over $E$ as follows: $\Pr_{\cal D'}[E_S] = \Pr_{\cal D}[S]$ for all $S \subseteq N$, and is 0 for all other subsets of $E$.

We conclude that efficiently PAC learning $\maxflow_\Gamma$ under the distribution $\cal D'$ is equivalent to PAC learning $f$, which cannot be done efficiently by Lemma~\ref{lem:min-sum-small-trees-unlearnable}.
\tikzstyle{simplenode}=[shape=circle,draw, fill=blue!20, minimum size=1em]
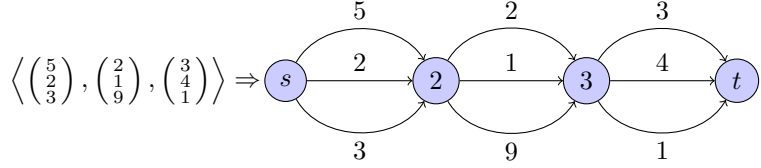
\begin{figure}
	\centering
	\begin{tikzpicture}[node distance=2cm]
	\node(mathpart){$\left\langle\left(\begin{smallmatrix}5\\2\\3\end{smallmatrix}\right),\left(\begin{smallmatrix}2\\1\\9\end{smallmatrix}\right),\left(\begin{smallmatrix}3\\4\\1\end{smallmatrix}\right)\right\rangle \Rightarrow$};
	\node(s)[simplenode,right of =mathpart]{$s$};
	\node(n1)[simplenode,right of = s]{$2$};
	\node(n2)[simplenode,right of =n1]{$3$};
	\node(t)[simplenode,right of = n2]{$t$};
	\path[->] (s) edge[bend left=60] node[above]{$5$} (n1);
	\path[->] (s) edge[bend left=-60] node[below]{$3$} (n1);
	\path[->] (s) edge node[above]{$2$} (n1);
	\path[->] (n1) edge[bend left=60] node[above]{$2$} (n2);
	\path[->] (n1) edge[bend left=-60] node[below]{$9$} (n2);
	\path[->] (n1) edge node[above]{$1$} (n2);
	\path[->] (n2) edge[bend left=60] node[above]{$3$} (t);
	\path[->] (n2) edge[bend left=-60] node[below]{$1$} (t);
	\path[->] (n2) edge node[above]{$4$} (t);
	\end{tikzpicture}
	\caption{An example of the reduction from $k$-min-sum functions to network flow games, described in Theorem~\ref{cor:networkflows-unlearnable}. Note that we rename the nodes $1$ and $4$ to $s$ and $t$, respectively. The description of the original $3$-min-sum function is given in $\langle \cdot \rangle$.}
	\label{fig:networkflows-unlearnable}
\end{figure}
\end{proof}

Learning network flow games is thus generally a difficult task, computationally speaking. In order to obtain some notion of tractability, let us study a learning scenario, where we limit our attention to sets that constitute paths in $\Gamma$. In other words, we limit our attention to distributions $\cal D$ such that if $\cal D$ assigns some positive probability to a set $S$, then $S$ must be an $s$-$t$ path in $\Gamma$. One natural example of such a distribution is the following: we make graph queries on $\Gamma$ by performing a random walk on $\Gamma$ until we either reach $t$ or have traversed more than $|V|$ vertices. 

Given a directed path $p = (w_1,\dots,w_k)$, we let $w(p)$ be the flow that can pass through $p$; that is, $w(p) = \min_{e \in p} w_e$. 
\begin{theorem}\label{thm:network-flow-games-path-learnable}
Network flow games are efficiently PAC learnable if we limit $\cal D$ to be a distribution over paths in $\Gamma$.
\end{theorem}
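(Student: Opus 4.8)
The plan is to exploit the fact that, once $\cal D$ is supported on $s$–$t$ paths, the characteristic function collapses to a \emph{bottleneck} function: for a path $p$ we have $v(p)=\maxflow_\Gamma(E_p)=\min_{e\in p} w_e = w(p)$, so the only unknowns are the $|E|=n$ edge weights, and a sample $(S_j,v_j)$ simply asserts that the cheapest edge on the path $S_j$ has capacity $v_j$. Following the consistency-based framework described in the preliminaries, I would establish efficient PAC learnability in two independent steps: (i) exhibit a poly-time algorithm that, given $m$ path-samples of some network flow game, outputs edge weights consistent with all samples; and (ii) bound the pseudo-dimension of the class of path-restricted network flow functions by a polynomial in $n$, so that a polynomial number of samples suffices.

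For the consistent algorithm, for each edge $e$ set $w^*_e=\max\{v_j : e\in S_j\}$ (and $w^*_e=0$ if $e$ lies on no sampled path). This is computable in $O(mn)$ time, and the resulting weights define a network flow game, hence a hypothesis in the class. To see that $v_{w^*}$ agrees with every sample I would invoke realizability: if $w$ are the true weights, then $v_j=\min_{e'\in S_j}w_{e'}\le w_e$ for each $e\in S_j$, so $w^*_e\le w_e$ for all $e$. Consequently, for each sampled path $S_j$, every edge $e\in S_j$ satisfies $w^*_e\ge v_j$ (as $v_j$ appears in the maximum defining $w^*_e$), giving $\min_{e\in S_j}w^*_e\ge v_j$; and $\min_{e\in S_j}w^*_e\le \min_{e\in S_j}w_e=v_j$. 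Hence $\min_{e\in S_j}w^*_e=v_j$, i.e.\ $w^*$ is consistent. In fact $w^*$ is the pointwise-smallest consistent weight vector.

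For the sample complexity I would bound $\pseudodim$ of the class $\cal C=\{p\mapsto \min_{e\in p}w_e : w\in\R_+^{E}\}$. Fix candidate paths $p_1,\dots,p_m$ and thresholds $r_1,\dots,r_m$. Since $\min_{e\in p_j}w_e\ge r_j$ holds exactly when $w_e\ge r_j$ for every $e\in p_j$, the indicator of the event ``$v_w(p_j)\ge r_j$'' is a conjunction of single-coordinate threshold tests on $w$. As $w$ ranges over $\R_+^E$, the number of distinct bit-patterns these $m$ tests produce is at most the number of cells in the arrangement of the corresponding coordinate hyperplanes: each of the $n$ coordinate axes is cut by at most $m$ thresholds into at most $m+1$ intervals, so there are at most $(m+1)^n$ cells. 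Shattering $m$ points requires $2^m\le (m+1)^n$, which forces $m=O(n\log n)$; thus $\pseudodim(\cal C)=O(n\log n)$ is polynomial.

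Combining the two steps with the criterion recalled in the preliminaries — a poly-time consistent algorithm together with polynomial pseudo-dimension yields efficient PAC learnability — completes the argument. I expect the main obstacle to be the pseudo-dimension bound: the consistency argument is immediate once the bottleneck reformulation is in place, but one must count the sign patterns of conjunctions of coordinate thresholds carefully (and confirm the standard fact that polynomial pseudo-dimension controls the exact-agreement error $\Pr_{S\sim\cal D}[f(S)\ne f^*(S)]$). A secondary subtlety worth checking is that edges absent from all training paths can be assigned weight $0$ without harm, since any resulting function still lies in $\cal C$ and the generalization guarantee is uniform over the class.
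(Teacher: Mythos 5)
Your algorithm and its consistency analysis coincide exactly with the paper's: the paper also sets $\bar w_e = \max_{j : e \in p_j} v_j$ and proves $\bar w(p_j) = v_j$ via the same two inequalities ($\bar w_e \ge v_j$ for every $e \in p_j$, and $\bar w_e \le w_e$ so the bottleneck cannot exceed $v_j$). Where you genuinely diverge is the sample-complexity half. The paper never computes a pseudo-dimension; instead it observes that every observed value $v_j$ is a bottleneck of the \emph{true} weights, so each output weight $\bar w_e$ must lie in the finite set $\{0, a_1,\dots,a_k\}$ of at most $n+1$ values, and hence the algorithm's outputs range over a finite class of at most $(n+1)^n$ hypotheses; the finite-class consistency bound with $\log|\mathcal{C}| = O(n\log n)$ then gives the polynomial sample complexity. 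You instead bound the pseudo-dimension of the full infinite class $\{p \mapsto \min_{e\in p} w_e : w \in \R_+^E\}$ by counting cells of the axis-parallel threshold arrangement; your count $2^m \le (m+1)^n$, hence $m = O(n\log n)$, is correct. The trade-offs: the paper's route is more elementary and stays entirely within the finite-class consistency framework, at the (purely conceptual) cost that the finite class it counts depends on the unknown target weights; your route is target-independent, bounds the complexity of the whole class, and would carry over to agnostic or approximate-learning variants, but it leans on the implication ``polynomial pseudo-dimension $\Rightarrow$ polynomial sample complexity for the exact-agreement error $\Pr_{S\sim\mathcal{D}}[f(S)\ne f^*(S)]$,'' which is the subtle point you correctly flag; since the paper's preliminaries assert precisely this implication (citing Anthony and Bartlett) and use it elsewhere (e.g., for threshold task games), your proof is sound within the paper's framework.
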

\begin{proof}
Given an input $((p_1,v_1),\dots,(p_m,v_m))$, we let $\bar w_e = \max_{j:e \in p_j}v_j$. 

We observe that the weights $(\bar w_e)_{e \in E}$ are such that $\bar w(p_j) = w(p_j)$ for all $j \in [m]$. This is because for any $e \in p_j$, $\bar w_e \ge v_j$, so $\min_{e \in p_j} \bar w_e \ge v_j$. On the other hand, $\bar w_e \le w_e$ for all $e \in E$, since $w_e \ge v_j$ for all $v_j$ such that $e \in p_j$, and in particular $w_e \ge \max_{j:e \in p_j} v_j = \bar w_e$. Thus, $\min_{e \in p_j} \bar w_e\le \min_{e \in p_j}w_e = v_j$. In other words, by simply taking edge weights to be the maximum flow that passes through them in the samples, we obtain a graph that is consistent with the sample in polynomial time.

Now, suppose that the set of weights on the edges of the graph according to the target weights $w_e$ is given by $\{a_1,\dots,a_k\}$, where $k\leq n$. Then there are $(k+1)^n \le (n+1)^n$ possible ways of assigning values $(\bar w_e)_{e \in E}$ to the edges in $E$. In other words, there are at most $(n+1)^n$ possible hypotheses to test. Thus, in order to $(\eps,\delta)$-learn $(w_e)_{e\in E}$, where the hypothesis class $\cal C$ is of size $\le (n+1)^n$, we need a number of samples polynomial in $\frac 1 \eps, \log \frac1\delta$ and $\log|\cal C| \in \cal O(n\log n)$. 
\end{proof}
\subsection{Threshold Task Games}
In {\em Threshold task games (TTG)} each player $i \in N$ has an integer weight $w_i$; there is a finite list of tasks $\cal T$, and each task $t \in \cal T$ is associated with a threshold $q(t)$ and a payoff $V(t)$. Given a coalition $S \subseteq N$, we let $\cal T|_S = \{t \in \cal T\mid q(t) \le w(S)\}$. The value of $S$ is given by 
$v(S) = \max_{t \in \cal T|_S} V(t)$.
In other words, $v(S)$ is the value of the most valuable task that $S$ can accomplish. 
Weighted voting games (WVGs) are the special case of TTGs with a single task, whose value is 1; that is, they describe linear classifiers. 

Without loss of generality we can assume that all tasks in $\cal T$ have strictly monotone thresholds and values: if $q(t) > q(t')$ then $V(t) > V(t')$. Otherwise, we will have some redundant tasks. For ease of exposition, we assume that there is some task whose value is 0 and whose threshold is 0.
Let $\TTG(Q)$ be the class of $k$-TTGs for which the set of task values $Q\subseteq \R$ of size $k$. The first step of our proof is to show that $\TTG^k(Q)$ is PAC learnable.
\begin{lemma}\label{lem:ttg-known-tasks}
The class $\TTG^k(Q)$ is PAC learnable
\end{lemma}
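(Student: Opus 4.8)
The plan is to parametrize each game in $\TTG^k(Q)$ by its unknown player weights $w = (w_1,\dots,w_n)$ and task thresholds $q_1 \le \dots \le q_{k-1}$, where $q_0 = 0$ and the values $V_0 = 0 < V_1 < \dots < V_{k-1}$ of $Q$ are fixed and known, and then to establish two things: first, that the sample complexity is polynomial, which I obtain by bounding the pseudo-dimension; and second, that a hypothesis consistent with any realizable sample can be recovered by solving a linear feasibility problem. Since the characteristic function takes values in the finite set $Q \subseteq \R$, pseudo-dimension is the appropriate complexity measure, matching the framework set up in the preliminaries.

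The observation driving the sample-complexity bound is that, because the thresholds are monotone, for every coalition $S$ and every index $j$ we have $v(S) \ge V_j$ if and only if $w(S) \ge q_j$ (among the thresholds at or above level $j$, the smallest is $q_j$ itself). Thus each ``level query'' on a coalition is a single weighted-threshold test on the shared weights. To bound $\pseudodim(\TTG^k(Q))$, I fix sets $S_1,\dots,S_m$ and witnessing thresholds $r_1,\dots,r_m$; the only informative case is $0 < r_j \le V_{k-1}$, where $v(S_j)\ge r_j$ is equivalent to $w(S_j) \ge q_{\sigma(j)}$ for $\sigma(j) = \min\{\ell : V_\ell \ge r_j\}$. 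I collect the $n+k-1$ parameters into $\theta = (w_1,\dots,w_n,q_1,\dots,q_{k-1})$ and associate with each sample an augmented point $\tilde S_j \in \R^{n+k-1}$ whose first $n$ coordinates are the indicator of $S_j$ and whose coordinate $\sigma(j)$ (among the last $k-1$) equals $-1$, the rest being $0$. Then $v_\theta(S_j)\ge r_j \iff \langle \theta,\tilde S_j\rangle \ge 0$, so each attainable sign pattern is exactly a membership pattern of a homogeneous halfspace in $\R^{n+k-1}$.

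Homogeneous halfspaces in $\R^{n+k-1}$ shatter at most $n+k-1$ points, and restricting $\theta$ to the admissible region ($w_i \ge 0$, $q_1 \le \dots \le q_{k-1}$) can only shrink the attainable patterns; hence no collection of more than $n+k-1$ coalitions can be shattered, giving $\pseudodim(\TTG^k(Q)) \le n+k-1$. By the standard pseudo-dimension sample bound, a sample of size $m = O\left(\frac{1}{\eps}\left((n+k)\log\frac{1}{\eps} + \log\frac{1}{\delta}\right)\right)$ then suffices, which is polynomial in the relevant parameters.

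It remains to exhibit a consistent hypothesis. Writing each observed value as $v_j = V_{\ell_j}$, consistency is precisely the conjunction of the linear constraints $w(S_j) \ge q_{\ell_j}$ and, when $\ell_j < k-1$, $w(S_j) < q_{\ell_j+1}$, together with $w_i \ge 0$ and $0 = q_0 \le q_1 \le \dots \le q_{k-1}$; realizability guarantees this system is feasible, since the generating game is a solution. I expect the main obstacle to be the strict inequality $w(S_j) < q_{\ell_j+1}$ together with the integrality requirement on the weights: because all player weights are integers, every $w(S_j)$ is an integer, so each strict constraint tightens to $w(S_j) + 1 \le q_{\ell_j+1}$, turning the task into an ordinary linear feasibility program whose feasibility coincides with membership in the class. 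Recovering an genuinely integer solution (as proper learning demands) is the delicate point I would have to argue carefully on the way to an \emph{efficient} algorithm; but for PAC learnability as stated, the polynomial pseudo-dimension bound combined with the guaranteed existence of a consistent hypothesis already establishes the claim.
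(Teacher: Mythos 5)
Your proof is correct, and its core --- the pseudo-dimension bound --- takes a genuinely different and in fact sharper route than the paper's. The paper argues by contradiction: it sorts the shattering thresholds $r_1\le\dots\le r_L$ with $L=(k+1)(n+2)$, applies the pigeonhole principle to find $n+2$ consecutive thresholds lying in a single value-interval $[V(t_\ell),V(t_{\ell+1}))$, and then contradicts the fact that $n$-dimensional linear classifiers have VC dimension $n+1$, yielding $\pseudodim(\TTG^k(Q))\le (k+1)(n+2)$. You instead use the equivalence $v(S)\ge V_\ell \iff w(S)\ge q_\ell$ to encode every threshold query as a single homogeneous halfspace test on the joint parameter vector $\theta=(w,q)\in\R^{n+k-1}$, which gives $\pseudodim(\TTG^k(Q))\le n+k-1$ in one step --- cleaner, and a tighter bound (linear rather than bilinear in $n$ and $k$); either bound suffices for polynomial sample complexity. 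On the algorithmic side the two proofs essentially coincide: both map each sample to the task it must have completed and solve a linear feasibility program, differing only in how the strict inequality $w(S_j)<q_{\ell_j+1}$ is handled. The paper introduces a tolerance parameter $2^{-r}$ and re-runs the LP for $r=0,1,2,\dots$; you exploit integrality. One small repair to your version: integrality of $w(S_j)$ alone does not let you replace $w(S_j)<q$ by $w(S_j)+1\le q$ when thresholds may be real (take $w(S_j)=3$, $q=3.5$); you should first note that thresholds can be taken to be integers without loss of generality --- replacing $q$ by $\lceil q\rceil$ changes no coalition's value, since all coalition weights are integers --- after which your tightened LP is feasible under realizability and every solution to it is consistent. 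Finally, the properness concern you flag (the LP returns fractional weights while the class is defined with integer weights) is real, but it applies equally to the paper's own LP, which also searches over $\vec w\in\R_+^n$; and your fallback --- finite pseudo-dimension plus the guaranteed existence of a consistent hypothesis already gives (not necessarily efficient) PAC learnability as stated --- is sound.
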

\begin{proof}
In order to show this, we first bound the sample complexity of $\TTG^k(Q)$. We claim that $\pseudodim(\TTG^k(Q)) < (k+1)(n+2)$. The proof relies on the fact that the VC dimension of linear functions is $n+1$. 

Assume by contradiction that there exists some $\cal S$ of size $L$, where $L = (k+1)(n+2)$, and some values $r_1,\dots,r_L\in \R_+$ such that for all $\vec b \in \{0,1\}^L$ there is some TTG $f_{\vec b} \in \TTG^k(Q)$ such that $f_{\vec b}(S_j) \ge r_j$ when $b_j = 1$, and $f_{\vec b}(S_j) < r_j$ when $b_j = 0$. We assume that $0\le r_1\le\dots\le r_L$. 
Let us write the task set to $\cal T = \{t_1,\dots,t_k\}$, each with a value $V(t_\ell)\in Q$; we order our tasks by increasing value. 
Now, by the pigeonhole principle, there exists some task $t_\ell$ and some $j^*$ such that $r_{j^*},\dots,r_{j^*+(n+1)} \in [V(t_\ell),V(t_{\ell+1}))$. 
In particular, if we write $\cal S^* = \{S_{j^*},\dots,S_{j^*+(n+1)}\}$, then for every $\vec b \in \{0,1\}^L$, there is some $f_{\vec b} \in \TTG^k(Q)$ (defined by an agent weight vector $\vec w_{\vec b}$, and task thresholds $T_1^{\vec b},\dots,T_k^{\vec b}$), such that for all $S_j \in \cal S^*$, if $f_{\vec b}(S_j) > r_j$ it must be that $f_{\vec b}(S_j) > V_{\ell}$, i.e., $\vec w_{\vec b}(S_j) > T_\ell^{\vec b}$. If $f_{\vec b}(S_j) \le r_j$ then $\vec w_{\vec b}(S_j) \le T_\ell^{\vec b}$. Thus, $(\tup{\vec w_{\vec b},T_\ell^{\vec b}})_{\vec b}$ is a set of $n$-dimensional linear classifiers that is able to shatter a set of size $n+2$, a contradiction. 
To conclude, $\pseudodim(\TTG^k(Q)) \le (k+1)(n+2)$, which implies that the sample complexity for PAC learning TTGs is polynomial.  

It is easy to construct an efficient algorithm that is consistent with any sample from $\TTG^k(Q)$ via linear programming.

Given the inputs, 
$$(S_1,v_1),\dots,(S_m,v_m),$$ 
let us write the distinct values $\alpha_1,\dots,\alpha_\ell$ in $v_1,\dots,v_m$, 
and create $\ell$ tasks with values $V(t_1) = \alpha_1,\dots,V(t_\ell) = \alpha_\ell$. 
We observe that since $(S_1,v_1),\dots,(S_m,v_m)$ represent outputs of a function in $\TTG^k(Q)$, 
it must be the case that $\ell \le k$. 
We further assume that $V(t_1)<V(t_2)<\dots<V(t_\ell)$. We also define $t_{\ell+1}$ to be an auxiliary task that has $q(t_{\ell+1}) = V(t_{\ell+1}) = \infty$. 
Next, we obtain weights for the players and thresholds for the tasks. For every set $S_j$ it must be the case that if $v_j = V(t_r)$, then $w(S_j) \ge q(t_r)$, but $S_j$ does not have sufficient weight to complete $t_{r+1}$. Let $\sigma:[m] \to [\ell]$ be the mapping that, for each sample $(S_j,v_j)$, maps $S_j$ to the task that it completed; i.e. the task $t_r$ for which $v_j = V(t_r)$. 
In order to find the correct set of weights for tasks and agents we use linear programming; we require that every set $S_j$ has a weight of at least $q(t_{\sigma(j)})$, but no more than $q(t_{\sigma(j)+1})$ (we add a ``dummy'' variable $q(t_{\ell+1})$ whose weight is extremely high to make this requirement hold for the $\ell$-th task). Since we cannot explicitly code the condition $w(S_j)<q(t_{\sigma(j)+1})$ into an LP, we add a tolerance parameter $2^{-r}$. We repeatedly run LP~\eqref{lp:ttg} with $r =0,1,2,\dots$ until a feasible solution is found. Thus, we will need to rerun LP~\eqref{lp:ttg} at most the number of bits required to represent the values in the original TTG. 
\begin{align}
\mbox{find: } & \vec w \in\R_+^n,\vec q \in\R_+^\ell&\label{lp:ttg}\\
\mbox{s.t.: }& w(S_j) \ge q(t_{\sigma(j)}) & \forall j \in [m]\notag\\
& w(S_j) \le q(t_{\sigma(j)+1})+2^{-r}& \forall j \in [m]\notag\\
&w_i \ge 0 & \forall i \in [n]\notag\\
&q_j \ge 0 & \forall j \in [\ell]\notag
\end{align}
The linear feasibility program~\eqref{lp:ttg} has $n+ \ell$ variables and $2m$ constraints, and is thus solvable in polynomial time. 
Moreover, a feasible solution exists; namely, the one that corresponds to the weights in the original TTG. Thus, there is an efficient, consistent algorithm for $\TTG^k(Q)$. 
\end{proof}

Let $\TTG^k$ be the class of TTGs with $k$ tasks; the following lemma shows that if we take a sufficient number of samples, a game $v\in \TTG^k$ can be PAC approximated by a game $\bar v \in \TTG^k(Q)$, where $Q$ are the observed values of $v$.
\begin{lemma}\label{lem:ttg-similar-vals}
Given $m\ge k\frac1\eps\log\frac1\delta$ independent samples 
$$(S_1,v(S_1)),\dots,(S_m,v(S_m))$$ 
from $v \in \TTG^k$; 
let $Q = \bigcup_{j = 1}^m \{v(S_j)\}$. The event $\Pr_{S\sim \cal D}[v(S) \notin Q] < \eps$ occurs with probability at most $1 - \delta$.
\end{lemma}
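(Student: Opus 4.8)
The plan is to exploit the fact that a game in $\TTG^k$ realizes only a bounded number of distinct values, so the distribution $\cal D$ over coalitions collapses to a distribution over at most $k+1$ real numbers (the $k$ task values together with $0$). Under this view, the event $v(S)\notin Q$ says that $S$ realizes a value that none of the $m$ sampled coalitions happened to realize, and the goal is to show that the total $\cal D$-mass of such ``unseen'' values is below $\eps$ with high probability. (I read the stated conclusion as ``with probability \emph{at least} $1-\delta$''; otherwise the inequality would point the wrong way, so I treat the ``at most'' as a typo.)

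First I would set up notation: for each value $\alpha$ in the range of $v$, write $p_\alpha = \Pr_{S\sim\cal D}[v(S)=\alpha]$. Because the samples are i.i.d., the probability that a fixed value $\alpha$ is \emph{never} realized among $S_1,\dots,S_m$ is exactly $(1-p_\alpha)^m$. I would then split the (at most $k+1$) values into \emph{heavy} ones, with $p_\alpha \ge \eps/k$, and \emph{light} ones, with $p_\alpha < \eps/k$. The threshold $\eps/k$ is the crux of the whole argument: it is chosen precisely so that the aggregate mass of all light values stays below $\eps$.

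The key step is a union bound over the heavy values. For a single heavy $\alpha$, the chance of missing it is $(1-p_\alpha)^m \le (1-\eps/k)^m \le e^{-m\eps/k}$; since there are at most $k$ heavy values, the probability that \emph{some} heavy value is missing is at most $k\,e^{-m\eps/k}$, which drops below $\delta$ once $m$ is of order $\frac{k}{\eps}\log\frac{k}{\delta}$. Hence, with probability at least $1-\delta$, every heavy value appears in the sample and therefore lies in $Q$. Conditioned on that event, every value outside $Q$ must be light, and since there are at most $k$ distinct values in all, the total mass of the missing (light) values is strictly less than $k\cdot(\eps/k)=\eps$, which is exactly $\Pr_{S\sim\cal D}[v(S)\notin Q] < \eps$.

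I do not expect any genuinely hard step here — the argument is a standard missing-mass / coupon-collector estimate, and the only thing that must be gotten right is the heavy/light threshold. The one delicate bookkeeping point is reconciling the sample complexity: the union bound naturally yields a $\log(k/\delta)$ factor, whereas the lemma writes $m \ge k\frac1\eps\log\frac1\delta$. I would either absorb the extra $\log k$ — harmless, since every downstream use only needs $m$ polynomial in $k$, $\frac1\eps$, and $\log\frac1\delta$ — or tighten the constants by bounding the number of heavy values more sharply (note $\sum_\alpha p_\alpha = 1$ already caps the count of heavy values at $k/\eps$ as well as at $k$). Neither refinement affects the asymptotic claim.
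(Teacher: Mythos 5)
Your proof is correct (modulo the sample-complexity caveat you flag yourself), but it takes a genuinely different route from the paper's. You union-bound over individual \emph{values}: split the realizable values into heavy ($p_\alpha\ge\eps/k$) and light, show each heavy value is missed by the sample with probability at most $(1-\eps/k)^m\le e^{-m\eps/k}$, and observe that on the good event everything outside $Q$ is light, so the unseen mass is below $\eps$. The paper instead union-bounds over the at most $2^k$ possible observed value-\emph{sets} $Q$: for any fixed ``bad'' $Q$ (one with $\Pr_{S\sim\cal D}[v(S)\notin Q]\ge\eps$), the probability that all $m$ samples take values inside $Q$ is at most $(1-\eps)^m$, and it suffices to drive this below $\delta/2^k$. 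The trade-off is quantitative: the paper's union has exponentially many events, but each decays as $e^{-\eps m}$, giving $m=O\bigl(\frac1\eps(k+\log\frac1\delta)\bigr)$, which is consistent (up to constants) with the stated bound $m\ge k\frac1\eps\log\frac1\delta$; your union has only about $k$ events, but each decays only as $e^{-\eps m/k}$ because of the $\eps/k$ heaviness threshold, so you genuinely need the extra $\log k$ factor, i.e.\ $m=O\bigl(\frac k\eps\log\frac k\delta\bigr)$. As you note, this is harmless for the downstream use in Theorem~\ref{thm:ttg}, which only needs polynomial sample complexity, and your reading of ``at most $1-\delta$'' as a typo for ``at least $1-\delta$'' agrees with what the paper actually proves. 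Two minor nits on your side: you say ``at most $k+1$'' values when setting up but ``at most $k$'' when summing the light mass --- pick one convention (threshold $\eps/(k+1)$ makes everything consistent, changing only constants); and the final bound is an inequality on the probability of the bad event, not a statement ``conditioned on'' the good event, though your phrasing makes clear you mean the former.
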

\begin{proof}
First, recall that for every TTG in $\TTG^k$, we have $k$ different task values $$\{V_1,\dots,V_k\},$$ 
and any set of observed samples will show some $Q \subseteq \{V_1,\dots,V_k\}$; thus, there can be at most $2^k$ observed sets of values from the samples. 
Let us write $\cal D^m$ to be the probability distribution from which our samples are taken. Given $Q = \bigcup_{j = 1}^m \{v(S_j)\}$, let $Y_\eps(Q)$ be the event that $\Pr_{S\sim \cal D}[v(S)\notin Q] < \eps$. We need to bound the probability (over samples from $\cal D^m$) that the event $\lnot Y_\eps(Q)$ occurs.
Let $\cal S_m(Q)$ be the set of all sequences sampled from $\cal D^m$ for which the observed set of values is $Q$.

First, we note that if $S_1,\dots,S_m$ generate the values $Q$, such that $\lnot Y_\eps(Q)$ occurs, then $\Pr_{S\sim \cal D}[v(S) \in Q] \le 1 - \eps$. Next, note that if $(S_1,\dots,S_m) \in \cal S_m(Q)$ then $v(S_j) \in Q$ for all $j = 1,\dots,m$. Thus, 
\begin{align*}
\Pr[(S_1,\dots,S_m) \in \cal S_m(Q)] \le& \prod_{j =1}^m \Pr_{S\sim \cal D}[v(S) \in Q] \le (1 -\eps)^m,
\end{align*}
which by our choice of $m$, is at most $\frac{\delta}{2^k}$:
\begin{align*}
(1 - \eps)^m &\le \euler^{-\eps m} \le \euler^{-\eps k\frac1\eps \log\frac1\delta}\\
						 &=\frac{\delta}{\euler^k} < \frac{\delta}{2^k}
\end{align*}
Putting it all together,
\begin{align*}
\Pr[\lnot Y_\eps(Q)] =& \sum_{Q: \lnot Y_\eps(Q)}\Pr[(S_1,\dots,S_m) \in \cal S_m(Q)] \\
<& 2^k\frac{\delta}{2^k} = \delta
\end{align*}
which concludes the proof.
\end{proof}
Using the two lemmas, we are now ready to prove that the class of $k$-TTGs, $\TTG^k$, is PAC learnable.
\begin{theorem}\label{thm:ttg}
Let $\TTG^k$ be the class of $k$-TTGs; then $\TTG^k$ is PAC learnable.
\end{theorem}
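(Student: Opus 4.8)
<br>

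The goal is to prove that $\TTG^k$ (threshold task games with $k$ tasks) is PAC learnable, having already established two lemmas: Lemma~\ref{lem:ttg-known-tasks} says $\TTG^k(Q)$ is PAC learnable when the task-value set $Q$ is fixed, and Lemma~\ref{lem:ttg-similar-vals} says that with $m \ge k\frac1\eps\log\frac1\delta$ samples, the observed values $Q = \bigcup_j \{v(S_j)\}$ capture all but an $\eps$-fraction of the distribution's mass, with high probability.

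My plan is to combine the two lemmas by a union-bound / error-splitting argument. The key insight is that even though we do not know the true task values in advance, we can simply read off the observed values $Q$ from the sample and then run the consistent-algorithm machinery of Lemma~\ref{lem:ttg-known-tasks} on $\TTG^k(Q)$. The target $v \in \TTG^k$ need not lie in $\TTG^k(Q)$, but Lemma~\ref{lem:ttg-similar-vals} guarantees that $v$ agrees with \emph{some} game in $\TTG^k(Q)$ on all but an $\eps$-fraction of the mass (the coalitions whose value falls outside $Q$). So the strategy is: first apply Lemma~\ref{lem:ttg-similar-vals} to argue that, with probability at least $1-\delta_1$, the unobserved-value region $\Pr_{S\sim\cal D}[v(S)\notin Q] < \eps_1$ is small; then, \emph{conditioned on} this good event, run the PAC learner for $\TTG^k(Q)$ from Lemma~\ref{lem:ttg-known-tasks} on the same (or a fresh batch of) samples to obtain an $f^*\in\TTG^k(Q)$ whose error against the "clipped" target is at most $\eps_2$ with probability at least $1-\delta_2$.

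The main step is bounding the total error of $f^*$ against the true $v$. I would decompose $\Pr_{S\sim\cal D}[f^*(S)\ne v(S)]$ into the mass where $v(S)\notin Q$ (bounded by $\eps_1$ via Lemma~\ref{lem:ttg-similar-vals}) plus the mass where $v(S)\in Q$ but $f^*(S)\ne v(S)$ (bounded by $\eps_2$ via Lemma~\ref{lem:ttg-known-tasks}, since on the region $\{S : v(S)\in Q\}$ the function $v$ restricted there behaves like a member of $\TTG^k(Q)$ that the consistent learner is approximating). Setting $\eps_1 = \eps_2 = \eps/2$ and $\delta_1=\delta_2=\delta/2$ and summing gives total error below $\eps$ with confidence at least $1-\delta$. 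The sample complexities from both lemmas are polynomial in $n,\frac1\eps,\log\frac1\delta$ (the $\TTG^k(Q)$ bound comes from $\pseudodim(\TTG^k(Q))\le(k+1)(n+2)$, and Lemma~\ref{lem:ttg-similar-vals} needs only $k\frac1\eps\log\frac1\delta$), so the combined sample count remains polynomial, and both underlying algorithms run in polynomial time.

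The step I expect to be the main obstacle — or at least the one requiring the most care — is making the conditioning rigorous. Specifically, the set $Q$ is itself a random object determined by the sample, and the consistent LP-based learner of Lemma~\ref{lem:ttg-known-tasks} is invoked with this data-dependent $Q$; I need to ensure the two probabilistic guarantees compose correctly rather than being entangled in a way that breaks independence. The cleanest route is to use a fresh, independent batch of samples for the learning phase after $Q$ is fixed, so that conditioned on the good event of Lemma~\ref{lem:ttg-similar-vals}, the PAC guarantee of Lemma~\ref{lem:ttg-known-tasks} applies verbatim with $\cal C = \TTG^k(Q)$; alternatively, one argues that since $Q$ ranges over at most $2^k$ possible value-sets (as noted in the proof of Lemma~\ref{lem:ttg-similar-vals}), a union bound over all candidate $Q$'s keeps the failure probability controlled. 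I would flag this composition/conditioning as the delicate point and handle it via the fresh-sample split for a clean two-phase argument.
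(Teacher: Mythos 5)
Your proposal is correct and follows essentially the same route as the paper: read off $Q$ from the sample, use Lemma~\ref{lem:ttg-similar-vals} to bound the mass of coalitions whose value falls outside $Q$ by $\eps/2$, use the consistent learner of Lemma~\ref{lem:ttg-known-tasks} on $\TTG^k(Q)$ for the remaining $\eps/2$ (the paper's ``clipped'' game $\bar v$ is exactly your restricted-to-$Q$ surrogate), and combine via the same $\delta/2+\delta/2$ split and error decomposition. The conditioning subtlety you flag is genuine but is glossed over in the paper's proof sketch, which simply reuses the same samples and ``pretends'' they came from $\bar v \in \TTG^k(Q)$ --- justifiable because every observed label lies in $Q$ by construction, so the samples are literally consistent with $\bar v$; your fresh-sample split (or the union bound over the at most $2^k$ candidate value sets $Q$) is a clean way to make that step rigorous.
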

\begin{proof}[Proof Sketch]
Let $(S_1,v(S_1)),\dots,(S_m,v(S_m))$ be our set of samples. According to Lemma~\ref{lem:ttg-similar-vals}, we can choose $m$ such that with probability $\ge 1 - \frac\delta2$, $\Pr_{S\sim \cal D}[v(S)\notin Q] < \frac\eps2$. 
We let $\bar v$ be the TTG $v$ with the set of tasks reduced to $Q$; that is $\bar v(S) = v(S)$ if $v(S) \in Q$, and is the value of the best task that $S$ can complete whose value is in $Q$ otherwise. Thus, we can pretend that our input is from $\bar v \in \TTG^k(Q)$. According to Lemma~\ref{lem:ttg-known-tasks}, if $m$ is sufficiently large, then with probability $\ge 1-\frac\delta2$ we will output some $v^* \in \TTG^k(Q)$ such that $\Pr_{S\sim \cal D}[\bar v(S) = v^*(S)] \ge 1 - \frac\eps2$.
Thus, with probability $\ge 1 - \delta$, we have that both $\Pr_{S\sim \cal D}[\bar v(S) = v^*(S)] \ge 1 - \frac\eps2$ and $\Pr_{S\sim \cal D}[v(S) = \bar v(S)] \ge 1 - \frac\eps2$. We claim that $v^*$ PAC approximates $v$. Indeed, 

\begin{align*}
\Pr_{S\sim \cal D}[v(S) \ne v^*(S)] =& \Pr_{S\sim \cal D}[(v(S) \ne v^*(S))\land(v(S) = \bar v(S))]\\
&+ \Pr_{S\sim \cal D}[(v(S) \ne v^*(S))\land(v(S) \ne \bar v(S))]\\
\le& \Pr_{S\sim \cal D}[v^*(S) \ne \bar v(S)] + \Pr_{S\sim \cal D}[v(S) \ne \bar v(S)] < \eps
\end{align*}
\end{proof}

\subsection{Induced Subgraph Games}
An induced subgraph game (ISG)~\cite{deng94complexity} is given by a weighted graph $\Gamma = \tup{N,E}$, where for every pair $i,j \in N$, $w_{ij}\in \Z$ denotes the weight of the edge between $i$ and $j$. We let $W$ be the weighted adjacency matrix of $\Gamma$. The value of a coalition $S \subseteq N$ is given by 
$v(S) = \sum_{i \in S}\sum_{j \in S\mid j>i}w_{ij}$;
i.e. the value of a set of nodes is the weight of their induced subgraph.

\begin{theorem}
The class of induced subgraph games is efficiently PAC learnable.
\end{theorem}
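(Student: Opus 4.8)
The plan is to exploit the fact that an ISG value function is \emph{linear} in the edge weights. For a coalition $S \subseteq N$ associate the feature vector $\phi(S) \in \{0,1\}^{\binom{n}{2}}$, indexed by unordered pairs, with $\phi(S)_{\{i,j\}} = 1$ exactly when both $i,j \in S$. Writing $\vec w = (w_{ij})_{i<j}$ for the weight vector, the definition gives $v(S) = \sum_{i<j}\phi(S)_{\{i,j\}}\, w_{ij} = \langle \vec w, \phi(S)\rangle$. Thus the class of ISGs on $n$ players is exactly the family of linear functionals $\{S \mapsto \langle \vec w,\phi(S)\rangle : \vec w \in \R^{\binom{n}{2}}\}$, evaluated on the $\binom{n}{2}$-dimensional feature representation, and both the sample-complexity bound and the consistent algorithm will follow from this linear structure.

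First I would bound the sample complexity via the pseudo-dimension. The set of ISG value functions sits inside a vector space of real-valued functions of dimension at most $\binom{n}{2}$ (in fact equal, since the two-element coalitions $S=\{i,j\}$ realize every coordinate basis vector $\phi(S) = e_{\{i,j\}}$). Invoking the standard fact that a $d$-dimensional vector space of real functions has pseudo-dimension at most $d$, we obtain $\pseudodim(\mathrm{ISG}) \le \binom{n}{2} = O(n^2)$. Since this is polynomial in $n$, the bound relating pseudo-dimension to sample complexity stated in the preliminaries yields a polynomial sample complexity.

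Second I would exhibit an efficient consistent algorithm. Given samples $(S_1,v_1),\dots,(S_m,v_m)$ drawn from some ISG, each sample imposes the \emph{linear} constraint $\langle \vec w, \phi(S_j)\rangle = v_j$ on the unknown weight vector $\vec w$. This is a system of $m$ linear equations in the $\binom{n}{2}$ unknowns $w_{ij}$; it is guaranteed to be feasible (the true weights satisfy it), and any feasible solution defines an ISG consistent with the sample. Such a solution is found in polynomial time by Gaussian elimination (or a feasibility LP), and outputting the induced $v^*$ gives the consistent hypothesis. Combining this with the polynomial pseudo-dimension bound establishes efficient PAC learnability.

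The main subtlety I anticipate concerns \emph{integrality}: the definition requires $w_{ij}\in\Z$, so to produce a hypothesis inside the class I must find an integral consistent $\vec w$, whereas a generic real solution of the linear system need not be integral. Since the system has integer data and is feasible over $\Z$ (the true weights witness this), an integral solution exists and can be computed in polynomial time by solving the associated system of linear Diophantine equations, e.g.\ via Hermite normal form. I would flag this as the one place where slightly more than plain Gaussian elimination is required; every other step reduces to the pseudo-dimension and consistent-learning machinery for linear functions.
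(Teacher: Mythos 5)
Your proposal is correct and follows essentially the same route as the paper: rewrite $v(S)$ as a linear function of the $O(n^2)$ pairwise weights, get polynomial sample complexity from the standard bound for linear classes, and obtain consistency by solving the linear system $\sum_{i,i'\in S_j} w_{i,i'} = v_j$ in polynomial time, noting feasibility is witnessed by the true weights. Your integrality remark is in fact a point of extra care beyond the paper's own proof, which solves the system over the reals and does not address the requirement $w_{ij}\in\Z$ from the class definition; your Hermite-normal-form fix correctly closes that (minor) gap.
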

\begin{proof}
Let $W$ be the (unknown) weighted adjacency matrix of $\Gamma$. Let us write $\vec e_S$ to be the indicator vector for the set $S$ in $\R^n$. That is, the $i$-th coordinate of $\vec e_S$ is 1 if $i \in S$, and is 0 otherwise. We observe that in an ISG, $v(S) = \vec e_S^T W \vec e_S$. 
In other words, learning the coefficients of an ISG is equivalent to learning a linear function with $\cal O(n^2)$ variables (one per vertex pair), which is known to have polynomial sample complexity~\citep{anthony2009neural}.

Now, given observations $(S_1,v_1),\dots,(S_m,v_m)$, we need to solve a linear system with $m$ constraints (one per sample), and $\cal O(n^2)$ variables (one per vertex pair, as above), which is solvable in polynomial time.
\begin{align}\label{eq:isg}
\mbox{Find:} & (w_{i,i'})_{i,i' \in N}&\\
\mbox{s.t.} & \sum_{i,i' \in S_j}w_{i,i'} = v_j& \forall j = 1,\dots,m\notag
\end{align} 
The output of~\eqref{eq:isg} is guaranteed to be consistent, and since a solution exists (namely, $W$), we have a straightforward consistent poly-time algorithm, and conclude that the class of ISGs is efficiently PAC learnable.
\end{proof}

\subsection{Additional Classes of Cooperative Games}
Before we conclude, we present a brief overview of additional results obtained for other classes of cooperative games. 
\paragraph{Vector Weighted Voting Games :}
In weighted voting games (WVGs), each player $i \in N$ has an integer weight $w_i$; the weight of a coalition $S \subseteq N$ is defined as $w(S) = \sum_{i \in S}w_i$. A coalition is {\em winning} (has value 1) if $w(S) \ge q$, and has a value of 0 otherwise. Here, $q$ is a given {\em threshold}, or {\em quota}. The class of vector WVGs is a simple generalization of weighted voting games given by \citename{elkind2009computational}. 
A vector WVG of degree $k$ (or $k$-vector WVG) is given by $k$ WVGs: $\tup{\vec w_1;q_1},\dots,\tup{\vec w_k;q_k}$. A set $S \subseteq N$ is winning if it is winning in every one of the $k$ WVGs. 

Learning a weighted voting game is equivalent to learning a separating hyperplane, which is known to be easy~\citep{kearns1994introduction}. However, learning $k$-vector WVGs is equivalent to learning the intersection of $k$-hyperplanes, which is known to be $\NP$-hard even when $k = 2$~\citep{alekhnovich2004learnability,blum1992training,klivans2002learning}. Thus, $k$-WVGs are not efficiently PAC learnable, unless $P=\NP$.

\paragraph{MC-nets: }
{\em Marginal Contribution Nets (MC-nets)}~\citep{ieong2005mcnets} provide compact representation for cooperative games. Briefly, an MC-net is given by a list of rules over the player set $N$, along with values. A rule is a Boolean formula $\phi_j$ over $N$, and a value $v_j$. For example, $r = x_1\land x_2 \land \lnot x_3 \to 7$ assigns a value of 7 to all coalitions containing players 1 and 2, but not player 3. Given a list of rules, the value of a coalition is the sum of all values of rules that apply to it. PAC learning MC-nets can be reduced to PAC learning of DNF formulas, which is known to be intractable~\citep{klivans2001learning}. 

More formally, let $\phi = \bigvee_{j=1}^m C_j$ be a DNF formula, where $C_1,\dots,C_m$ are conjunctive clauses over a set of $n$ variables. The reduction is somewhat similar to the one used in the proof of Theorem~\ref{cor:networkflows-unlearnable}: 
given the set of variables $x_1,\dots,x_n$, we define our player set to be $N = \{1,\dots,n\}$. We perform the following transformation: for every clause $C_j$ in the DNF we say that a set $S\subseteq N$ satisfies the clause if $i \in S$ whenever $x_i \in C_j$, and $i \notin S$ if $\lnot x_i \in C_j$. We define a rule of the form $C_j \rightarrow 1$; that is, if $S\subseteq N$ satisfies the clause it is awarded one point. We can associate a truth assignment for the variables $x_1,\dots,x_n$ with a set in $N$ in the natural way: if $i\in S$ iff $x_i$ is set to true. Thus, if a truth assignment satisfies the DNF $\phi$, it has a positive value under the MC net; otherwise, its value is $0$. To conclude, any algorithm that properly learns MC-nets will be easily transformed into one that properly learns DNF formulas. 

\paragraph{Coalitional Skill Games:}
Coalitional Skill Games (CSGs)~\citep{bachrach2008skill} are another well-studied class of cooperative games. Here, each player $i$ has a skill-set $K_i$; additionally, there is a list of tasks $\cal T$, each with a set of required skills $\kappa_t$. Given a set of players $S\subseteq N$, let $K(S)$ be the set of skills that the players in $S$ have. Let $\cal T(S)$ be the set of tasks $\{t \in \cal T\mid \kappa_t \subseteq K(S)\}$. The value of the set $\cal T(S)$ can be determined by various utility models; for example, setting $v(S) = |\cal T(S)|$, or assuming that there is some subset of tasks $\cal T^* \subseteq \cal T$ such that $v(S) = 1$ iff $\cal T^* \subseteq \cal T(S)$; the former class of CSGs is known as {\em conjunctive task skill games (CTSGs)}. 

PAC learnability of coalitional skill games is generally computationally hard. This holds even if we make some simplifying assumptions; for example, even if we know the set of tasks and their required skills in advance, or if we know the set of skills each player possesses, but the skills required by tasks are unknown. However, we can show that CTSGs are efficiently PAC learnable if player skills are known. 

\section{Discussion}

Our work is limited to finding outcomes that are likely to be stable for an unknown function. However, learning approximately stable outcomes is a promising research avenue as well. Such results naturally relate approximately stable outcomes --- such as the $\eps$ and least core~\citep{coopbook}, or the cost of stability~\citep{costab2009} --- with PMAC learning algorithms~\citep{balcan2011submodular}, which seek to \emph{approximate} a target function (M stands for ``mostly'') with high accuracy and confidence.

This work has focused on the core solution concept; however, learning other solution concepts is a natural extension. While some solution concepts, such as the nucleolus or the approximate core variants mentioned above, can be naturally extended to cases where only a subset of the coalitions is observed, it is less obvious how to extend solution concepts such as the Shapley value or Banzhaf power index. These concepts depend on the marginal contribution of player $i$ to coalition $S$, i.e., $v(S\cup\{i\})-v(S)$. Under the Shapely value, we are interested in the expected marginal contribution when a permutation of the players is drawn uniformly at random, and $i$ joins previous players in the permutation. According to Banzhaf, $S$ is drawn uniformly at random from all subsets that do not include $i$. Both solution concepts are easy to approximate if we are allowed to draw coalition values from the appropriate distribution~\citep{bachrach10approx} --- this is a good way to circumvent computational complexity when the game is known. It would be interesting to understand what guarantees we obtain for arbitrary distributions.

Finally, other models of cooperative behavior would naturally extend to a learning environment. For example, in {\em hedonic games}~\citep{aziz2016hedoniccomsoc}, each player reports a complete preference order over coalitions --- either ordinal, e.g. ``player $i$ prefers coalition $S$ to coalition $T$, or cardinal, i.e. each player assigns a numerical score to each coalition. Given players' preferences over coalitions, our goal is to find a {\em coalition structure} (i.e. a partition of players into groups), that satisfy certain notions of stability or fairness. One such solution concept is {\em Nash stability}; a partition $\pi$ is Nash stable, if no player can leave the coalition it belongs to under $\pi$, and benefit by moving to another existing group under $\pi$. The same line of investigation raised in this work can be applied to the hedonic setting: given a set of coalitions and some information about them (e.g. the value that their members assign to them, their order of preference according to the players etc.), can we find a partition of the agents that will likely satisfy certain fairness properties? 

\paragraph{Acknowledgments:}
This research is partially supported by NSF grants \\
CCF-1451177,
CCF-1422910, and CCF-1101283,  CCF-1215883 and IIS-1350598, a Microsoft Research Faculty Fellowship,
and Sloan Research Fellowships.

The authors would like to thank Amit Daniely for a useful discussion which led to Theorem~\ref{thm:pac-stable}, as well as the reviewers of the paper's original IJCAI 2015 version. 

\bibliographystyle{named}
\bibliography{bibshort}

\end{document}